\newtheorem*{theorem*}{Theorem}
\begin{document}

\title{Stochastic Barnes-Hut Approximation for Fast Summation on the GPU}

\author{Abhishek Madan}
\affiliation{%
  \institution{University of Toronto}
  \country{Canada}}
\email{amadan@cs.toronto.edu}

\author{Nicholas Sharp}
\affiliation{%
  \institution{NVIDIA}
  \country{USA}}
\email{nsharp@nvidia.com}

\author{Francis Williams}
\affiliation{%
  \institution{NVIDIA}
  \country{USA}}
\email{fwilliams@nvidia.com}

\author{Ken Museth}
\affiliation{%
  \institution{NVIDIA}
  \country{USA}}
\email{kmuseth@nvidia.com}

\author{David I.W. Levin}
\affiliation{%
  \institution{University of Toronto and NVIDIA}
  \country{Canada}}
\email{diwlevin@cs.toronto.edu}

\newcommand{\bp}{\mathbf{p}}
\newcommand{\bq}{\mathbf{q}}
\newcommand{\Src}{\mathcal{S}}
\newcommand{\Q}{\mathcal{Q}}
\newcommand{\Tree}{T}
\newcommand{\bbox}{B}
\newcommand{\mm}{\tilde{m}}
\newcommand{\com}{\tilde{\bp}}
\newcommand{\Children}{C}
\newcommand{\Contrib}{D}
\newcommand{\Path}{\mathcal{P}}
\newcommand{\FF}{\tilde{\beta}}
\newcommand{\prr}{p_\textnormal{rr}}
\newcommand{\pa}{p_\textnormal{agg}}

\newcommand{\francis}[1]{\textcolor{green}{[\textbf{Francis: #1}]}}
\newcommand{\abhishek}[1]{\textcolor{blue}{[\textbf{Abhishek: #1}]}}
\newcommand{\ken}[1]{\textcolor{orange}{[\textbf{Ken: #1}]}}

\newcommand{\dl}[1]{\textcolor{red}{[\textbf{Dave: #1}]}}

\renewcommand{\shortauthors}{Madan et al.}

\begin{abstract}
We present a novel stochastic version of the Barnes-Hut approximation.
Regarding the level-of-detail (LOD) family of approximations as control variates, we construct an unbiased estimator of the kernel sum being approximated.
Through several examples in graphics applications such as winding number computation and smooth distance evaluation, we demonstrate that our method is well-suited for GPU computation, capable of outperforming a GPU-optimized implementation of the deterministic Barnes-Hut approximation by achieving equal median error in up to 9.4x less time.
\end{abstract}

\begin{CCSXML}
<ccs2012>
   <concept>
       <concept_id>10010147.10010371.10010396.10010402</concept_id>
       <concept_desc>Computing methodologies~Shape analysis</concept_desc>
       <concept_significance>500</concept_significance>
       </concept>
 </ccs2012>
\end{CCSXML}

\ccsdesc[500]{Computing methodologies~Shape analysis}

\keywords{Monte Carlo methods, Barnes-Hut approximation}
\begin{teaserfigure}
  \includegraphics[width=\textwidth]{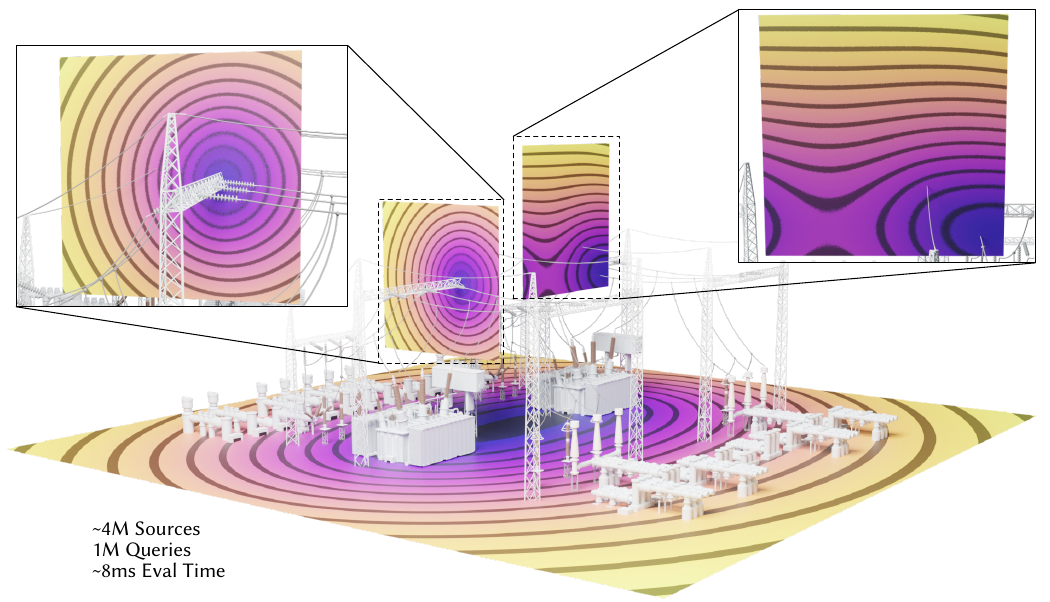}
  \caption{We compute the electrical potential induced by a power station on a GPU, by taking $2^{22}$ surface samples and evaluating the potential field on a $1000^2$ slice plane. Using one sample per tree subdomain, our stochastic method computes the effect of over 4 trillion particle-particle interactions in 8ms with almost no visible artifacts, while a GPU-optimized Barnes-Hut implementation takes 14ms and brute force takes 2800ms. \copyright{} Electrical substation model by CG\_loma under TurboSquid 3D Model License --- Standard.}
  \label{fig:teaser}
\end{teaserfigure}

\maketitle

\section{Introduction}
Fast summation lies at the heart of numerous practical algorithms in science, engineering, and computer graphics.
For example, $N$-Body problems rely on fast summation schemes to approximate gravitational forces~\cite{barnes1986hierarchical,greengard1987fast}, winding number computation for robust inside-outside testing can be accelerated using fast summation techniques~\cite{barill2018fast}, and solutions to linear partial differential equations (PDEs) via boundary element-type approaches require fast summation methods to evaluate the solution across the domain~\cite{martinsson2019fast}.

In computer graphics, stochastic estimation of integrals, which use random sampling to form probabilistic estimates instead of deterministic approximations, have become increasingly popular mathematical and practical tools.
Despite the noise artifacts produced by such methods at low sample counts, their ability to handle highly complex problems, and the tendency for the noise to flatten out with more samples, has been used to great effect in rendering and light transport simulation.
More recently, these approaches have also been applied effectively to problems in geometry processing and to the solution of linear PDEs.
Stochastic approaches have controllable running costs by user-defined sample budgets, and are often embarrassingly parallel, thus mapping well to modern GPU architectures and yielding algorithms which outperform classical approaches at large scales.

Unfortunately, stochastic methods for integration/summation can suffer from poor convergence rates in the number of samples.
Control variates (approximations of the integral) are a variance reduction technique that can circumvent slow convergence by giving low-sample regimes more acceptable error~\cite{mcbook}.
Introducing a control variate means that only the residual between the true integrand and the control variate needs to be estimated.
The better the approximation provided by the control variate, the smaller this residual, which then allows fewer samples in the estimate for comparable error.
However, it can be difficult to design a control variate by hand, limiting their use to problems with computationally efficient analytic approximations (although some neural approaches have recently been proposed to expand their applicability~\cite{muller2020neural,li2024neural}).

This work begins with the observation that in many of the new application domains for stochastic methods such as geometry processing, our integral/summation equations are much more structured, and many fast deterministic algorithms exist for their evaluation.
Our motivating insight is that these fast, deterministic, approximate methods can serve as robust and accurate control variates for stochastic algorithms.

We apply this observation to computing fast approximate summations.
Concretely, we propose an unbiased stochastic estimator for computing spatially varying fields which result from summation of a large number of globally supported kernel functions (e.g., gravitational potentials, winding numbers, smooth distances), using the traditional Barnes-Hut method as a control variate~\cite{barnes1986hierarchical}.
Standard Barnes-Hut is difficult to efficiently parallelize on the GPU due to the deep tree traversals required.
We show that using a Barnes-Hut traversal as a control variate to a stochastic estimator for the the full summation yields an algorithm that can provide up to an \textbf{9.4x} performance improvement for identical median errors when running on an NVIDIA RTX 4090 GPU.

The contributions of this work include a new use of a classical numerical method as a control variate, a proof that the resulting stochastic numerical method is unbiased, and the demonstration of our approach on a number of graphics-centric applications. 
An open-source implementation is provided on the project website.\footnote{https://www.dgp.toronto.edu/projects/stochastic-barnes-hut/}

\section{Related Work}

Fast summation algorithms are crucial in several fields of science, and as a result, many computational approximation methods have stemmed from these fields.
The classic example is $N$-body simulations: given $N$ particles that each exert some force on all other particles, how does a physical simulation evolve?
In computational physics, the Barnes-Hut approximation~\cite{barnes1986hierarchical} was developed to speed up $N$-body gravitational dynamics simulations, leading to $O(N \log N)$ computational complexity for the entire simulation or $O(\log N)$ to compute the force on a single particle.
Concurrently, the fast multipole method (FMM)~\cite{greengard1987fast} was developed to solve similar problems, but instead achieves $O(N)$ computational complexity, albeit at the cost of a significantly more complex preprocessing phase.
At their core, these methods rely on similar principles: they both use spatial data structures to organize the particles and multipole approximations to aggregate contributions from points with little impact on the sum.
However, the FMM takes it a step further and also computes polynomial approximations for each spatial region that approximate the total effect of distant multipole expansions, though higher-order approximations are required to obtain similar accuracy to Barnes-Hut~\cite{martinsson2019fast}.
The analytic work in deriving the multipole and local approximations makes it challenging to use the FMM as a general-purpose fast summation accelerator, and so we build upon Barnes-Hut, which can use first-order approximations which do not require such work.

\setlength{\columnsep}{0.7em}
\setlength{\intextsep}{0.01em}
\begin{wrapfigure}{r}{0.4\columnwidth}
  \centering
  \includegraphics[width=0.4\columnwidth]{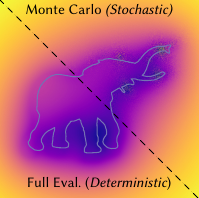}
\end{wrapfigure}
Neither the FMM nor Barnes-Hut are restricted to $N$-body problems --- they can both be applied to many problems where a set of source points interacts with a set of query points (e.g., samples from a polyline interacting with a grid of image pixels, see inset), with interaction magnitudes that decay with distance.
This generality has allowed both methods to find several applications in computer graphics as well, including generalized winding number acceleration~\cite{barill2018fast}, point set surfaces~\cite{alexa2001point}, fast linking number computation~\cite{qu2021fast}, repulsive curves and surfaces~\cite{yu2021rc,yu2021rs}, and smooth distances~\cite{madan2022smoothdists}.
Our algorithm would be able to accelerate any of these methods as an alternative fast summation technique, and is particularly suitable for enabling or improving GPU implementations of these methods.

Fast summations on the GPU are a classic algorithm that can be easily implemented in an efficient and query-parallel fashion by parallel loads into shared memory~\cite{nyland2007fast}.
Despite the simplicity of the brute-force GPU algorithm, the fast summation algorithms that work well on the CPU like Barnes-Hut are more challenging to port to the GPU.
The primary difficulty is in effectively parallelizing tree traversals, without introducing excessive thread execution divergence or being limited by memory access latency.
Before the advent of modern GPUs, many techniques were designed for general parallel or vector architectures: \citet{barnes1990modified} groups particles with similar traversals together to perform a single vectorized traversal per group; \citet{makino1990vectorization} proposes a stackless tree traversal and a breadth-first vectorized tree construction algorithm; and \citet{hernquist1990vectorization} parallelizes a single query by a breadth-first traversal.
More recently, \citet{burtscher2011efficient} proposed a complete pipeline for a Barnes-Hut implementation in CUDA using warp-voting traversals, and \citet{bedorf2012sparse} proposed a similar CUDA pipeline using breadth-first tree traversals.
Many similar techniques have also been developed for ray tracing on the GPU, particularly for query-parallel~\cite{wald2001interactive} and memory-efficient tree traversals~\cite{hapala2011efficient,vaidyanathan2019wide,museth21nanovdb}, and parallel tree builders~\cite{lauterbach2009fast,ylitie2017efficient}, culminating in software~\cite{parker2010optix} and hardware support for ray tracing on modern GPUs.
We take inspiration from many of these techniques in our work.

Monte Carlo methods are a simple yet effective means of estimating integrals using samples of the integrand~\cite{metropolis1949monte}, at the cost of introducing some noise (see inset above).
They have been used in physically-based rendering to estimate high-dimensional integrals that would otherwise be intractable, from the seminal work of \citet{kajiya1986rendering} to a plethora of techniques available today~\cite{pharr2023physically}.
More recently, Monte Carlo methods have also been used in other areas of graphics, such as solving volumetric PDEs~\cite{sawhney2020monte,sugimoto2023practical} and fluid simulation~\cite{rioux2022monte}.

Although most of these applications involve estimating continuous quantities, Monte Carlo methods can also estimate sums using samples of the summand.
This is particularly relevant for sampling the contributions of many lights in rendering, where there are a discrete number of light sources in the scene.
Many works broadly involve building a tree around the light sources in order to classify their importance~\cite{shirley1996monte}, and cluster lights together in a manner similar to Barnes-Hut~\cite{walter2005lightcuts,yuksel2019stochastic} or through an online learning process~\cite{wang2021learning}.
However, these approximations must deal with significant discontinuities due to occlusion, as well as anisotropy that prevents spatial clustering (instead using pre-existing lights as cluster representatives); neither of these are present in our problem setting, so we can perform more effective aggregation to reduce variance, in line with multipole approximations used in Barnes-Hut.
The Virtual Point Lights method~\cite{dachsbacher2014scalable,keller1997instant} does create lights that do not exist in the scene, but these exist along pre-computed light paths, and are primarily intended to unify complex lighting effects within a single framework, so they must be combined with the aforementioned many-light sampling methods to work well in a path tracer.

Departing from previous work in rendering and stochastic PDEs, we show that exploiting the structure of all-pairs summation problems and fusing fast, tree-based approximate methods with stochastic approaches yields a GPU-friendly algorithm that can outperform standard approaches in terms of error and performance in application-relevant regimes.

\section{Method}

\begin{figure}
    \centering
    \includegraphics[width=\columnwidth]{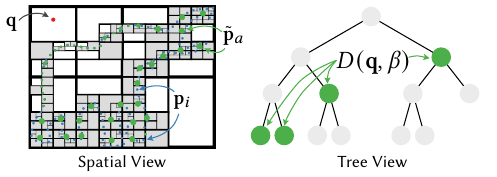}
    \caption{An illustration of a tree traversal performed by Barnes-Hut. Spatially (left), the centers of mass $\com_a$ used in the approximation are shown in green, while the source points $\bp_i \in \Src$ are shown in blue. On the tree data structure (right), the contribution nodes $D(\bq,\beta)$ are highlighted. The binary tree is for illustrative purposes only and does not correspond to the quadtree on the left.}
    \label{fig:bh}
\end{figure}
For the rest of the paper, we will operate in the following framework: given a kernel function $f(\bp, \bq)$, a set of source points $\Src = \{ \bp_1, \ldots, \bp_M \}$ with associated masses $\{ m_1, \ldots, m_M \}$, and a set of query points $\Q = \{ \bq_1, \ldots, \bq_N \}$, we want to evaluate
\begin{equation}
    \label{eq:kernel_sum}
    F(\bq) = \sum_{i = 1}^M m_j f(\bp_i, \bq)
\end{equation}
for all $\bq_j \in \Q$.
In typical applications, $f(\bp, \bq)$ is a function that depends on the distance between $\bp$ and $\bq$, $\| \bp - \bq \|$, typically decreasing in magnitude as the distance increases (though there are some exceptions, such as the Green's function of the 2D Laplacian and the Green's function of the 3D biharmonic operator).
Although this summation can be computed exactly for all $\bq_j$, it scales linearly with $M$ and $N$ and so becomes impractically slow as both values increase.
As a result, approximation methods are used in practice that take advantage of the distance-dependent behavior of $f$, such as the fast multipole method and the Barnes-Hut approximation.
Since our method is an extension of Barnes-Hut, we will briefly review it before describing our work.

\subsection{Overview of Barnes-Hut}
The Barnes-Hut approximation greatly reduces the cost of evaluating Eq.~\ref{eq:kernel_sum}, at the expense of some approximation error.
The key insight behind it is that at large distances from a query point $\bq$, many points in $\Src$ will have similar contributions to the total $F(\bq)$, which allows their contributions to be grouped together (Fig.~\ref{fig:bh}).
Many groups can be created, which greatly reduce the number of terms in the summation.

To facilitate the creation of such groups, the points in $\Src$ are placed in a subdivision data structure, such as an octree or a bounding volume hierarchy (Fig.~\ref{fig:bh}, left).
Typically, spatial subdivision structures are preferred for points (i.e., each tree node is characterized by the points it contains), so we will assume that we have an octree-like data structure, possibly with a wider per-dimension branching factor $d$ than the standard 2, and that leaf nodes contain only one source point.
We denote tree nodes by $\Tree_a$ and the set of children of $\Tree_a$ by $\Children(\Tree_a)$ (or equivalently, the set of child node indices); we also use $\Tree_a$ to denote the subset of points (and equivalently, point indices) contained in the node, so $\Tree_a = \bigcup_{\Tree_b \in \Children(\Tree_a)} \Tree_b$.
Each tree node has an associated axis-aligned bounding box containing all its points $\bbox_a$, with diameter $| \bbox_a |$, and also has an aggregate mass $\mm_a = \sum_{b \in \Tree_a} m_b$ and a center of mass $\com_a = \frac{\sum_{b \in \Tree_a} \mm_b \bp_b}{\mm_a}$.
To approximately evaluate $F(\bq)$, we traverse the tree starting from the root $\Tree_0$, and stop if either a \textit{far field condition} $\frac{\| \bq - \tilde{\bp}_a \|}{|\bbox_a|} \ge \beta$ is met (given a user-defined constant $\beta$) or if $\Tree_a$ is a leaf node, and add $\mm_a f(\com_a, \bq)$ to the total; otherwise we recursively continue the traversal at each of the node's children.
(Higher-order approximations can be used instead~\cite{martinsson2019fast}, but for ease of presentation we will not discuss them in detail.)
Intuitively, the far field condition is a scale-independent mechanism to check whether a cluster of particles is indistinguishable from a single, larger particle.
This tree traversal produces a set of tree nodes $\Contrib(\bq, \beta)$ that cover all the points in $\Src$, called the \textit{contribution nodes} of the query, and the approximated summation is then
\begin{equation}
    \label{eq:bh_sum}
    F_{BH}(\bq, \beta) = \sum_{\Tree_a \in \Contrib(\bq, \beta)} \mm_a f(\com_a, \bq).
\end{equation}
Pseudocode for Barnes-Hut is provided in the supplemental.

\subsection{Path Interpretation of Barnes-Hut}
When $\beta = \infty$, the Barnes-Hut algorithm traverses the entire tree and visits every node to obtain the exact answer from Eq.~\ref{eq:kernel_sum}.
Effectively, for each source point $\bp_i$, we traverse a path $\Path_i = [ \Tree_{i,0}, \ldots, \Tree_{i,d_i} ]$ down the tree, where $\Tree_{i,0} = \Tree_0$ and $d_i$ is the depth of the leaf node containing $\bp_i$.
Each tree node along a path represents an approximation of the true value of $m_i f(\bp_i, \bq)$, and traversing to the next node in the path is equivalent to replacing this approximation with a more accurate one.
The path traversal process can therefore be represented by a telescoping sum:
\begin{equation}
    \label{eq:single_path}
    F_{\Path_i}(\bq) = m_i f(\com_{i,0}, \bq) + \sum_{k = 1}^{d_i} m_i (f(\com_{i,k}, \bq) - f(\com_{i,k-1}, \bq)),
\end{equation}
and summing over all $M$ paths (one path per source point) gives $F(\bq) = \sum_{i = 1}^M F_{\Path_i}(\bq)$.
With smaller values of $\beta$, only part of the tree is traversed, and as a result the paths to the leaves are truncated.
Denoting a path prefix by $\Path_{i,k} = [ \Tree_{i,0}, \ldots, \Tree_{i,k} ]$ ($k \le d_i$), and the Barnes-Hut-induced truncation length by $\ell_i(\beta)$, Barnes-Hut traverses path prefixes $\Path_{i,\ell_i(\beta)}$, and we then have
\begin{equation}
    \label{eq:bh_path}
    F_{BH}(\bq, \beta) = \sum_i F_{\Path_{i,\ell_i(\beta)}}(\bq).
\end{equation}
Many paths share common prefixes, and will thus be identical after truncation, after which they can be aggregated into a single path with the combined mass of all the constituent paths.
Expanding the summands in Eq.~\ref{eq:bh_path} and grouping common prefixes gives Eq.~\ref{eq:bh_sum}, and thus path truncation and aggregation is equivalent to the typical truncated tree traversal perspective of Barnes-Hut.

\subsection{Path-Based Monte Carlo Estimator}
At first glance, there is no computational benefit to the path interpretation of Barnes-Hut --- computing telescoping sums is strictly more work than simply canceling terms and evaluating $f$ at the contribution nodes.
However, it opens up \textit{stochastic} methods to estimating Eq.~\ref{eq:kernel_sum}.
The telescoping sum in Eq.~\ref{eq:single_path}, for example, resembles an estimator that uses an infinite series of biased estimators to form an unbiased estimator~\cite{rhee2015unbiased,misso2022unbiased}.
Numerical approximations are in a sense biased estimators of the exact quantity --- although Barnes-Hut approximations from Eq.~\ref{eq:bh_sum} are deterministic, we can view them in a more general probabilistic sense, where it is clear that by construction their expected value does not match Eq.~\ref{eq:kernel_sum}.
Another way of viewing this is in terms of control variates, where the terms of the telescoping sum represent replacing a less accurate control variate for the estimator with a more accurate one.
Monte Carlo methods are most commonly deployed in settings where deterministic approximations are challenging to construct, such as in physically-based rendering, but we flip this perspective and instead use deterministic approximations as control variates to bootstrap a Monte Carlo estimator.
Not only does this produce an unbiased estimate of the exact quantity we wish to evaluate, but it also provides computational benefits on the GPU, as we will later demonstrate.

There are two primary random variables that we need to sample: the path index $I$ (equivalent to a source point index), and the path length $K$ (which depends on $I$).
For simplicity, we uniformly sample the path index among source point indices, and leave importance sampling schemes (e.g., based on distance to $\bq$) as future work.
The path length is sampled using a Russian roulette process, where at a node at depth $k$ along the path $\Path_i$, we decide to go to the next node with a probability $p_{i,k}(\bq)$.
In general, any non-zero probability will allow for an unbiased estimate, but in practice they must be carefully selected to keep the variance low.

\begin{figure}
    \centering
    \includegraphics[width=\columnwidth]{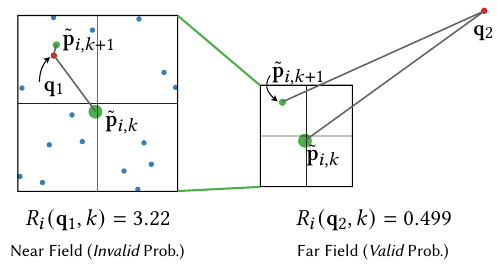}
    \caption{In the near field, $\com_{i,k+1}$ is much closer to $\bq_1$ than $\com_{i,k}$, so $R_i(\bq_1, k) > 1$, making it an invalid probability (left). However, in the far field, the distance from $\bq_2$ to $\com_{i,k}$ is similar to its distance to $\com_{i,k+1}$, so $R_i(\bq_2, k) \approx \frac{1}{d}$ (i.e., the reciprocal of the per-dimension branching factor), which is useful convergent behavior for Russian roulette probabilities (right).}
    \label{fig:ff_ratio}
\end{figure}

To sample path lengths, we take further inspiration from Barnes-Hut and use the \textit{far field ratio} $\FF_i(\bq, k) = \frac{\| \bq - \com_{i,k} \|}{|\bbox_{i,k}|}$ of a node $\Tree_{i,k}$ along $\Path_i$ as a key building block of the Russian roulette probabilities.
In Barnes-Hut, this ratio is compared with $\beta$ in the far field condition to determine whether or not a tree traversal continues to the node's children.
In the context of our algorithm, however, want to convert this deterministic decision into a stochastic decision that produces similar behavior on average (i.e., short paths in the far field, long paths in the near field).
To achieve this, we use a ratio of far field ratios $R_i(\bq, k) = \frac{\FF_i(\bq, k)}{\FF_i(\bq, k+1)}$ between the current and next node in the path to determine the Russian roulette probability.

To understand the significance of $R_i$, consider a query point $\bq$ that is far away from the source point $\bp_i$.
In an octree-like structure, the diameter of a child node $|\bbox_{i,k+1}|$ is $d$ times less than the diameter of its parent $|\bbox_{i,k}|$, where $d$ is the per-dimension branching factor.
Also, as we move further down the path $\Path_i$, the distance from $\bq$ to the parent's center of mass is similar to the distance from $\bq$ to the child's center of mass (i.e., $\| \bq - \com_{i,k} \|$ and $\| \bq - \com_{i,k+1} \|$ move closer together).
Therefore, for long paths through an octree-like structure, $R_i(\bq, k) \to \frac{1}{d}$ as $k$ grows (Fig.~\ref{fig:ff_ratio}, right).
Thus, $R_i(\bq, k)$ is a reasonable Russian roulette probability that disincentivizes deep paths in the far field without completely prohibiting them.

The near field (i.e., source points near $\bq$) requires a bit more care, since $\FF_i(\bq, k)$ may be quite small even at deep levels, and $R_i(\bq, k)$ may be greater than 1 at shallow levels if $\bq \in \bbox_{i,k}$, making it an invalid probability (Fig.~\ref{fig:ff_ratio}, left).
To solve these problems, we clamp the parent's far field ratio to be at least 1, to ensure that $\bq$ is outside of $\bbox_{i,k}$ and $\FF_i(\bq, k)$ exhibits far field behavior before we use it in $R_i$, and we clamp the ratio to be at most 1 to ensure it is a valid probability.
(We could use $R_i$ to simultaneously define a Russian roulette probability and a splitting factor when $R_i > 1$~\cite{Szirmay-KalosAS05}, but due to the difficulty of efficiently implementing path splitting on the GPU, we opted to exclusively use Russian roulette.)
The final Russian roulette probability is then
\begin{equation}
    \label{eq:rr_prob}
    p_{i,k}(\bq) = \min \left( 1, \frac{\max(1, \FF_i(\bq, k))}{\FF_i(\bq, k+1)} \right).
\end{equation}
In the supplemental material, we compare our Russian roulette probabilities with fixed probabilities of $p_{i,k} = \frac{1}{2}$ and completely disabling Russian roulette (i.e., $p_{i,k} = 1$).

After deciding to traverse to the next node in the path, we must include the corresponding term of the telescoping series in our running estimate.
Directly using Eq.~\ref{eq:single_path} is inefficient because it only accounts for the contribution of a single $\bp_i$ to Eq.~\ref{eq:kernel_sum}, and has high variance because a node's center of mass can vary quite significantly from any individual source point in the node.
The tree structure allows a solution to both of these problems: it makes it possible to aggregate several identical telescoping terms from different $F_{\Path_i}$ together in a single estimator term, and enables \textit{contribution swaps} between a parent node and the sum of all of its children:
\begin{equation}
    \label{eq:contrib_swap}
    \Delta_{i,k} = \left( \sum_{c \in \Children(\Tree_{i,k})} \mm_c f(\com_c, \bq) \right) - \mm_{i,k} f(\com_{i,k}, \bq).
\end{equation}
Essentially, this swap is a form of antithetic sampling, because the contribution of the child node along the current path is counterbalanced by the contributions of its sibling nodes.
Some example contribution swaps are illustrated in Fig.~\ref{fig:path_sample}.

\begin{figure}
    \centering
    \includegraphics[width=\columnwidth]{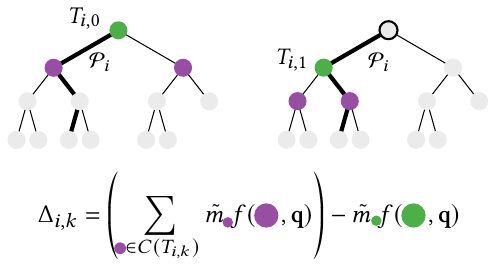}
    \caption{The first two contribution swaps along a path $\Path_i$ (bold). The parent node (green) subtracts its contribution from the sum of its child contributions (purple).}
    \label{fig:path_sample}
\end{figure}

The final estimator for a single truncated path sample $(I,K)$ is
\begin{equation}
    \label{eq:estimator}
    \hat{F}_1(\bq) = \mm_{I,0}f(\com_{I,0}, \bq) + \sum_{k=1}^K \frac{\Delta_{I,k-1}}{p(I \in \Tree_{I,k-1})p(K \ge k)},
\end{equation}
where $p(I \in \Tree_{i,k}) = \sum_{j \in \Tree_{i,k}} p(I = j)$, and $p(K \ge k) = \sum_{\ell=k}^{d_i} p(K = \ell) = \prod_{j = 0}^{k-1} p_{i,j}(\bq)$.
The $S$-sample estimator is $\hat{F}_S(\bq) = \frac{1}{S} \sum_{i=1}^S \hat{F}_1(\bq)$.

\begin{theorem}
\label{thm:unbiased}
Eq.~\ref{eq:estimator} is an unbiased estimator of Eq.~\ref{eq:kernel_sum}; that is, $E \left[ \hat{F}_1(\bq) \right] = F(\bq)$.
\end{theorem}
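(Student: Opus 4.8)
The plan is to compute $E[\hat{F}_1(\bq)]$ by iterating the expectation, first over the path length $K$ given the path index $I$ and then over $I$. Conditioned on $I$, the leading term $\mm_{I,0}f(\com_{I,0},\bq)$ is constant, so only the Russian-roulette sum needs attention. Rewriting that sum over all depths $k = 1, \ldots, d_I$ with the indicator of the event $\{K \ge k\}$ attached to the $k$-th summand, and using linearity of expectation (the sum is finite, so there is no convergence issue), $E_{K \mid I}$ replaces each indicator by $p(K \ge k)$, which exactly cancels the $p(K \ge k)$ in the denominator. This is the standard reason Russian roulette is unbiased; it requires only $p(K \ge k) = \prod_{j=0}^{k-1} p_{I,j}(\bq) > 0$, which holds since each $p_{i,k}(\bq) \in (0,1]$ by Eq.~\ref{eq:rr_prob}. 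After shifting the index, $E_{K \mid I}[\hat{F}_1(\bq)] = \mm_{I,0}f(\com_{I,0},\bq) + \sum_{k=0}^{d_I - 1} \Delta_{I,k}/p(I \in \Tree_{I,k})$.

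Next I would take the expectation over $I$ (uniform over source indices), write it as $\sum_i p(I=i)(\cdots)$, and reorganize the resulting double sum over pairs (source index, depth along its path) into a single sum over the \emph{internal} nodes of the tree. Writing $\Delta_a$ for the contribution swap $\big(\sum_{c \in \Children(\Tree_a)} \mm_c f(\com_c,\bq)\big) - \mm_a f(\com_a,\bq)$ attached to node $\Tree_a$, the key observation is that both $\Delta_{i,k}$ and $p(I \in \Tree_{i,k}) = \sum_{j \in \Tree_{i,k}} p(I=j)$ depend on $i$ only through the node $\Tree_{i,k}$. Hence, for each fixed internal node $\Tree_a$, the term $\Delta_a/p(I \in \Tree_a)$ is picked up exactly once for every source index $i$ with $\bp_i \in \Tree_a$, with total probability weight $\sum_{i:\bp_i \in \Tree_a} p(I=i) = p(I \in \Tree_a)$, cancelling the denominator. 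Since $\Tree_{I,0} = \Tree_0$ deterministically, this gives $E[\hat{F}_1(\bq)] = \mm_0 f(\com_0,\bq) + \sum_{\Tree_a\ \mathrm{internal}} \Delta_a$.

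Finally I would check that this telescopes to $F(\bq)$. Expanding each $\Delta_a$ and collecting terms, a node's contribution $\mm_b f(\com_b,\bq)$ enters with a $+$ sign exactly when $\Tree_b$ is a child of an internal node (equivalently, $\Tree_b$ is not the root) and with a $-$ sign exactly when $\Tree_b$ is itself internal. So every internal non-root node cancels, the root contributes $-\mm_0 f(\com_0,\bq)$, and each leaf $\Tree_b$ contributes $+\mm_b f(\com_b,\bq) = m_i f(\bp_i,\bq)$, using that a leaf holds the single source point $\bp_i$ with $\mm_b = m_i$ and $\com_b = \bp_i$. Adding back the standalone $\mm_0 f(\com_0,\bq)$ cancels the root's negative term, leaving $\sum_{i=1}^M m_i f(\bp_i,\bq) = F(\bq)$. (The degenerate single-node tree is immediate: there are no internal nodes, and $\hat{F}_1(\bq) = \mm_0 f(\com_0,\bq) = m_1 f(\bp_1,\bq)$.)

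The one step beyond bookkeeping is the reorganization of the $I$-expectation from an index over (source point, depth) to an index over tree nodes: it is precisely the cancellation of $p(I \in \Tree_a)$ against the aggregated probability weight that makes the contribution-swap aggregation in Eq.~\ref{eq:estimator} correct, so I would state explicitly that $\Delta_{i,k}$ and $p(I \in \Tree_{i,k})$ are functions of the node alone. The two expectation manipulations --- pulling $E_{K \mid I}$ through the finite sum and splitting $E = E_I\,E_{K \mid I}$ --- are routine because every sum involved is finite.
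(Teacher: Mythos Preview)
Your proof is correct. Both your argument and the paper's rely on the same two cancellations --- the Russian-roulette weight against $p(K \ge k)$ and the aggregated path probability against $p(I \in \Tree_a)$ --- followed by a telescoping collapse, so the underlying mechanics are identical. The organization differs: you factor the expectation as $E_I\,E_{K\mid I}$, dispatch the Russian-roulette cancellation first, and then reindex the remaining sum by \emph{internal tree node}, telescoping at the node level so that only the root and the leaves survive. The paper instead expands the joint expectation over $(I,K)$ in one shot, regroups ``like terms from contribution swaps along different paths with common prefixes'' to recover the per-source telescoping sums $F_{\Path_i}$ of Eq.~\ref{eq:single_path}, and then appeals to $F(\bq) = \sum_i F_{\Path_i}(\bq)$. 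Your route makes each cancellation more explicit and avoids the somewhat opaque rearrangement in the paper's fourth equality; the paper's route ties the computation back to the path interpretation of Section~3.2. Either presentation would be acceptable.
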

The proof of Theorem~\ref{thm:unbiased} is in the supplemental material.

\subsubsection{Domain Stratification}
Stratified sampling is an important variance reduction technique in Monte Carlo estimators that reduces the effect of samples clumping in similar regions.
Analogously, stratified sampling of different paths helps us achieve good coverage of the terms in Eq.~\ref{eq:kernel_sum}.
Spatial data structures already provide a convenient means of domain stratification with the different levels of the tree that represent nested subdomains (e.g., Fig.~\ref{fig:bh}, left), so we can easily achieve stratified samples by modifying the path roots.
In other words, rather than starting all paths from the tree root $\Tree_0$, we can instead define a set of descendant nodes, or subdomains, as path roots (as long as they cover all of $\Src$), and sample paths starting from these nodes. 
For simplicity, we simply use the direct children of the root $\Tree_0$ as the subdomains, but other choices are also possible, such as the Barnes-Hut contribution nodes $\Contrib(\bq, \beta)$ for a low $\beta$, though we leave the GPU-efficient implementation of such schemes as future work.

Algorithm pseudocode is provided in the supplemental.

\section{Results}

\begin{figure*}
\includegraphics[width=\textwidth]{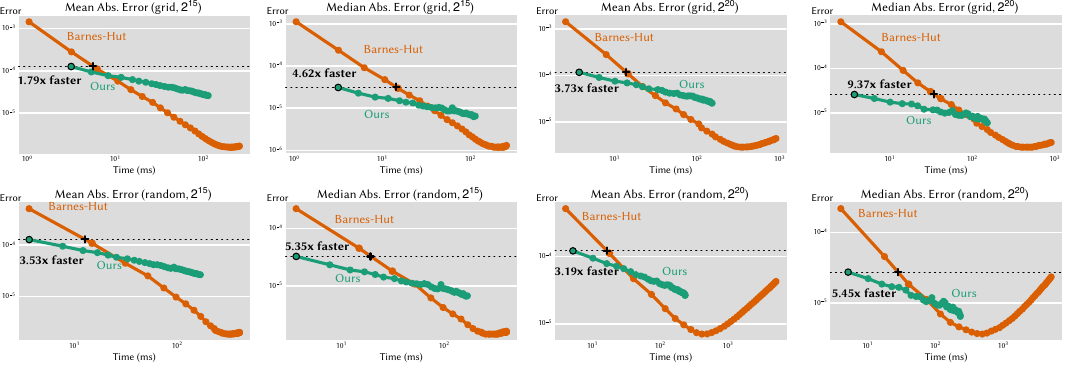}
\caption{
Convergence of our method vs Barnes-Hut, for 1 million random and grid query points, and $2^{15}$ and $2^{20}$ source points sampled from the Stanford bunny. Each point on the curves for our method (green) correspond to the number of samples per subdomain $S$ from 1 to 32; each point on the curves for Barnes-Hut (orange) correspond to the accuracy parameter $\beta$ from 1 to 40. The $S=1$ point on each curve for our method is outlined in black, and the point on the Barnes-Hut curve that achieves the same error is also marked (found by linear interpolation on the log-log plot). Across all source point sets and query distributions, and for both mean and median error, our method is roughly 2-9 times faster than Barnes-Hut. The Barnes-Hut curves turn upwards for large $\beta$ due to floating-point error for computing sums with many terms in single-precision.
}
\label{fig:convergence}
\end{figure*}

\subsection{Implementation Details}
We implemented our method and Barnes-Hut using C++ and CUDA, using the Eigen linear algebra library~\cite{eigen}.
To our knowledge, there is no open-source, general-purpose GPU implementation of Barnes-Hut, so we opted to implement and optimize our own baseline.
Both methods use a common octree-like spatial data structure, that supports variable branching factors.
The tree builder is currently CPU-only, and we leave incorporating fast GPU tree builders (e.g., the IndexGrid builder from fVDB~\cite{williams2024fVDB}) as future work.
We implemented Barnes-Hut with a stackless BVH traversal~\cite{makino1990vectorization,hapala2011efficient} and warp voting~\cite{barnes1990modified,burtscher2011efficient} to simultaneously reduce thread divergence and improve accuracy, using a tree with a per-dimension branching factor of $d=2$ (i.e., a total branching factor of 8).
The implementation of our method maintains two separate RNG states for drawing path index samples and Russian roulette samples independently, and seeds them both identically across warps to reduce thread divergence.
Our method also uses a tree with a wider per-dimension branching factor of $d=4$ (i.e., a total branching factor of 64) compared to Barnes-Hut, as we found that a narrow branching factor for Barnes-Hut allowed the algorithm to make efficiently make fine-grained far field decisions, while our algorithm performs better with wider trees that balance path depth with the cost of scanning over all of a node's children for contribution swaps.
To maximize performance, all operations are implemented using single-precision floating point.
Performance testing was done using a Dell Precision Workstation equipped with an AMD Threadripper CPU (Model 7955WX), 64 GB of RAM, and an NVIDIA RTX 4090 GPU with 24 GB of memory.

All geometric data used in the rest of this section is isotropically rescaled to fit in a $[-1,1]^3$ grid.

\subsection{Comparison to Barnes-Hut}
To start, we used a Coulomb-like potential $f(\bp, \bq) = -\frac{1}{\| \bp - \bq \|}$ to investigate the performance of our algorithm compared to Barnes-Hut.
We took $2^{15}$ (32,768) and $2^{20}$ (1,048,576) samples on the Stanford bunny, and evaluated both our method and Barnes-Hut on a grid of $100^3$ points in $[-1,1]^3$, as well as $1,000,000$ randomly distributed points in the same cube for a variety of parameter values.
In Barnes-Hut, the main parameter is the far field threshold $\beta$, and our method provides the number of samples per subdomain (spsd), denoted $S$.
In Fig.~\ref{fig:convergence}, we plotted the mean and median absolute error versus execution time for both source sizes and both query distributions, using parameter sweeps $\beta \in [1,2,\ldots,40]$ and $S \in [1,2,\ldots,32]$.

\setlength{\columnsep}{0.7em}
\setlength{\intextsep}{0.01em}
\begin{wrapfigure}{r}{0.5\columnwidth}
  \centering
  \includegraphics[width=0.5\columnwidth]{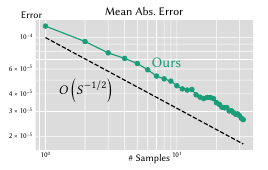}
\end{wrapfigure}
Across all configurations, Barnes-Hut can only match the mean and median error achieved by our method with $S=1$ after 2-8 times more execution time.
Furthermore, our relative speedup is faster on random query sets, where Barnes-Hut is slowed down by significant thread divergence (though warp voting allows it to use that time to drive down the overall error).
The relative lack of sensitivity of our method to the query set's spatial coherence is a remarkable property that is achieved through probabilistic path traversals, which change far less across query points than overall tree traversals.
Our method also achieves higher relative speedups under median error compared to mean error, but this suggests that much of the error in our method comes from isolated outliers rather than spread out across all queries.
In rendering, this is caused by ``fireflies'' due to importance sampling schemes giving high weight to some outlier samples that are far from the true expected value, so improving our sampling method and even using a specially designed denoiser for these noise patterns~\cite{vicini2019denoising} are potential ways to address this.
Another aspect of our method is that we converge at a slower than Barnes-Hut --- although we have a performance advantage at lower sample counts, this advantage disappears as $S$ increases because the error converges at a rate the Monte Carlo rate of $O(S^{-1/2})$ (though since we measure the mean and median absolute error rather than the root mean squared error, this rate is not precise in our case, see inset).
Our method is therefore most useful with low sample counts,
though regardless the low-sample regime is the main regime of interest, since we are estimating a quantity that can be evaluated in finite time.

We also tested our method on a variety of source distributions, using recommended parameter configurations for both Barnes-Hut and our method.
Papers in the Barnes-Hut literature in graphics recommend using $\beta=2$~\cite{barill2018fast,qu2021fast,madan2022smoothdists},
as this usually achieves a good tradeoff between efficient computation and acceptable error in graphics applications,
so we make the same choice.
As for our method, approximating a discrete summation puts a hard limit on the sample count, as there is a sample threshold where our method becomes more expensive than a full evaluation, and as seen in Fig.~\ref{fig:convergence}, our absolute error converges slowly; therefore, we pick $S=1$ as a reasonable default.
We evaluated our method with these choices across a mesh dataset from \citet{MPZ14}, where we drew $2^{15}$, $2^{17}$, and $2^{20}$ uniformly distributed samples from each surface to use as kernel source points (Table~\ref{tbl:dataset}).
Although our performance is comparable with Barnes-Hut for a small number of sources ($2^{15}$), our method scales much better, becoming 2x faster at the chosen settings with $2^{20}$ source samples, while maintaining lower mean and median error across all source sizes $M$.
Our maximum error is always an order of magnitude larger than Barnes-Hut, but the significantly lower median error suggests that these points are outliers.
\begin{table*}
    \centering
    \rowcolors{2}{white}{CornflowerBlue!25}
    \caption{Timings and errors of our method and Barnes-Hut on a dataset of 116 meshes, run with typical parameter settings. Barnes-Hut is evaluated with $\beta=2$, and our algorithm is evaluated with $S=1$ (i.e., one sample per subdomain). We vary the number of source samples $M$ to be $2^{15}$ (32,768), $2^{17}$ (131,072), and $2^{20}$ (1,048,576), draw them from each mesh surface in the dataset, and their total gravitational potential kernel on a grid of $100^3$ points. The mean and standard deviation for each error statistic across the dataset are reported, and the better metric is highlighted in bold at each source set size. At typical algorithm parameter settings, we match our outperform an optimized Barnes-Hut implementation by over 2x, while always achieving roughly 5x lower mean error and 17x lower median error.}
    \begin{tabular}{l r r r r}
        \toprule
        \textit{Algorithm (Typical Config.)} & \textit{Time (ms)} & \textit{Mean Abs. Err.} & \textit{Median Abs. Err} & \textit{Max Abs. Err.} \\
        Barnes-Hut $\left( \beta=2, M=2^{15} \right)$ & \textbf{3.48 $\pm$ 0.93} & 5.59e-04 $\pm$ 2.83e-04 & 4.50e-04 $\pm$ 2.37e-04 & \textbf{3.72e-03 $\pm$ 1.77e-03} \\
        Ours $\left( S=1, M=2^{15} \right)$ & 3.63 $\pm$ 0.97 & \textbf{9.91e-05 $\pm$ 2.28e-05} & \textbf{2.51e-05 $\pm$ 6.73e-06} & 5.61e-02 $\pm$ 3.05e-02 \\
        \midrule
        Barnes-Hut $\left( \beta=2, M=2^{17} \right)$ & 5.00 $\pm$ 1.46 & 5.60e-04 $\pm$ 2.82e-04 & 4.50e-04 $\pm$ 2.36e-04 & \textbf{3.73e-03 $\pm$ 1.75e-03} \\
        Ours $\left( S=1, M=2^{17} \right)$ & \textbf{3.92 $\pm$ 1.04} & \textbf{9.81e-05 $\pm$ 2.21e-05} & \textbf{2.43e-05 $\pm$ 6.22e-06} & 5.62e-02 $\pm$ 3.23e-02 \\
        \midrule
        Barnes-Hut $\left( \beta=2, M=2^{20} \right)$ & 8.61 $\pm$ 2.89 & 5.60e-04 $\pm$ 2.82e-04 & 4.50e-04 $\pm$ 2.36e-04 & \textbf{3.74e-03 $\pm$ 1.76e-03} \\
        Ours $\left( S=1, M=2^{20} \right)$ & \textbf{4.04 $\pm$ 1.12} & \textbf{9.77e-05 $\pm$ 2.16e-05} & \textbf{2.40e-05 $\pm$ 5.66e-06} & 5.60e-02 $\pm$ 3.13e-02 \\
        \bottomrule
    \end{tabular}
    \label{tbl:dataset}
\end{table*}

A useful theoretical property of our approach is that it is an unbiased estimator of the true summation value, as opposed to a deterministic but biased estimate of the quantity.
Fig.~\ref{fig:error} looks at the same slice as Fig.~\ref{fig:teaser} and examines the log error of both our method and Barnes-Hut.
Barnes-Hut error has ring-like artifacts that indicate where the far field condition caused the set of contribution nodes $D(\bq, \beta)$ to change in space; our method, meanwhile, does not have such artifacts, but error is more highly concentrated in regions with larger gradients.
Monte Carlo noise is also visible in our error.
\begin{figure}
\includegraphics[width=\columnwidth]{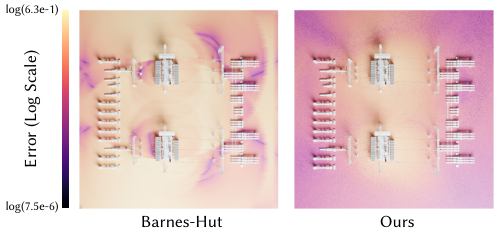}
\caption{In the example from Fig.~\ref{fig:teaser}, Barnes-Hut (left) exhibits discontinuous error patterns when the far field condition causes a change in contribution nodes, while our method (right) does not exhibit such artifacts and instead is more concentrated in regions where the field rapidly changes, while being lower on average than Barnes-Hut.}
\label{fig:error}
\end{figure}

\subsection{Applications}
We applied our method to a variety of different kernel functions, and compared the visual result quality from our method to the ground truth results computed with an exact GPU-optimized algorithm~\cite{nyland2007fast}.
In all examples, slice planes are evaluated as $1000^2$ grids for 1 million query points.
The identically-seeded RNG states across warps cause visually displeasing spatial coherence, so we shuffle the grid positions before passing them to our algorithm.
Although this induces a slight performance hit due to additional thread divergence, it is much smaller than a comparable hit to performance would be to Barnes-Hut (Fig.~\ref{fig:convergence}), since we retain most of the benefits of identical RNG states.

\subsubsection{Coulomb Potentials}
In Fig.~\ref{fig:teaser}, we show a visualization of an electric (Coulomb) potential field induced by $2^{22}$ (over 4 million) points on the surface of an electrical substation, with some points given a higher mass (i.e., charge) in the cables above the main units in the center.
With just one sample per subdomain, we produce a smooth result on all evaluation slices in roughly 8 milliseconds for the large slice plane and 6 milliseconds for the smaller planes.

\subsubsection{Winding Numbers}
In Fig.~\ref{fig:winding}, we compute winding numbers~\cite{jacobson2013winding} using our method from $2^{20}$ sources, with 1 sample per subdomain and 16 samples per subdomain, as well as the ground truth.
Our method mostly succeeds in classifying inside and outside cells with $S=1$, but contains noise near the boundary where the winding number changes rapidly; using 16 samples per subdomain mostly alleviates this issue at a significant performance cost.
\begin{figure}
\includegraphics[width=\columnwidth]{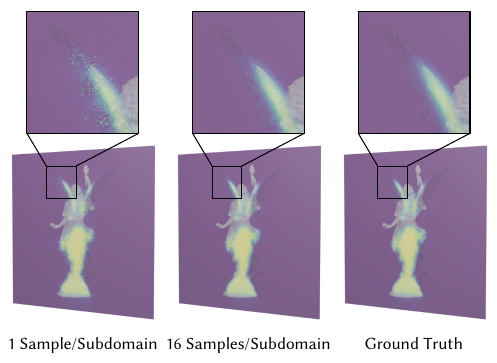}
\caption{Winding numbers computed via fast stochastic summation from $2^{20}$ sources. A full evaluation via brute force takes 887ms, while our method takes 7.80ms with 1 sample per subdomain, and 119ms with 16 samples per subdomain.}
\label{fig:winding}
\end{figure}

\subsubsection{Smooth Minimum Distances}
In Fig.~\ref{fig:smoothdist}, we compute smooth distance fields~\cite{madan2022smoothdists} from $2^{20}$ sources, using our method with 1 sample per subdomain and 64 samples per subdomain, and the ground truth.
More precisely, we estimate the sum of exponentials and apply a logarithm to the estimate as one would do for the full summation; however, unlike previous examples, the result is also biased for our method, because the expectation operator cannot be directly ``passed through'' the logarithm.
Unlike \citet{madan2022smoothdists}, we do not use off-center bounding boxes to guarantee the field is a conservative estimate of the true distance.
At $S=1$, our method has difficulty resolving the band near the zero level set, but can more effectively resolve it at $S=64$, though this is almost as expensive as a full evaluation.
\begin{figure}
\includegraphics[width=\columnwidth]{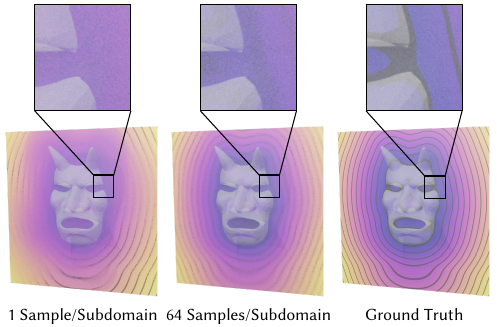}
\caption{Smooth distances computed via fast stochastic summation from $2^{20}$ source points. A full evaluation via brute force takes 710ms, and our method takes 7.10ms with 1 sample per subdomain 605ms with 64 samples per subdomain.}
\label{fig:smoothdist}
\end{figure}

\section{Conclusions and Future Work}
We presented a method that stochastically estimates kernel sums using a novel unbiased algorithm using a Barnes-Hut-esque hierarchy of control variates.
Our method already achieves excellent results on Coulomb potential evaluation problems, but there is significant room to improve the employed sampling techniques to achieve comparable results on other problems such as winding number and smooth distance evaluation.
We only use simple uniform sampling for paths, so there is an immediate opportunity to improve the sampled paths with importance sampling (though this could create thread divergence without care).
The Russian roulette probabilities also deserve further investigation using more principled approaches such as zero-variance theory~\cite{Vorba:2016:ARR} and estimator efficiency optimization~\cite{Rath2022} --- since there are a finite number of paths, it may be possible to use the aforementioned techniques to compute all probabilities for a given query point.
A unified Russian roulette and splitting framework on the GPU is another related direction of future work.
Recently, the ReSTIR algorithm~\cite{Wyman:2023:Gentle} has revolutionized real-time path tracing by reusing samples across pixels in space and time; integrating such an approach into our framework is another compelling direction.
Beyond discrete summations, applying this method to continuous integrals could be fruitful, potentially leading to an alternative to boundary element methods in the same manner Walk on Spheres provides an alternative to finite elements~\cite{sawhney2020monte}. 

Our method uses more compute than traditional tree-based approximations like Barnes-Hut, but as a result, it reduces thread divergence in a way that is relatively insensitive to query order --- a property that, to our knowledge, is unique to our method, providing it with unique advantages on the GPU.
Furthermore, stackless tree traversals require more memory traffic than their recursive counterparts to access nodes, so being able to perform simpler path traversals means that we access fewer tree nodes overall.
However, part of our computation involves a memory scan over an uncompressed buffer of node-aggregated data such as centers of mass and aggregated masses, which still increases memory traffic in conjunction with compute
and can cause scaling issues at higher sample counts,
though it is possible to reduce this through aggressive quantization and compression, as long as leaf data remains uncompressed to keep the estimator unbiased.
Compressing the tree nodes would also help Barnes-Hut, and further pursuing optimizations of both methods would change the performance tradeoff between the two.
More radical changes such as completely different data structures could also enable the use of ``single-term'' estimators~\cite{Mcleish2011,misso2022unbiased} that require cheap random access to internal tree nodes (as opposed to following child pointers in typical tree data structures), and would completely eliminate thread divergence on the GPU.
Overall, following the trend of other recent Monte Carlo methods, we believe there is a lot of potential in stochastic methods to improve upon deterministic methods, or at the very least, provide a dialogue between the two classes of methods to offer different choices for users depending on their needs, making it an exciting time to explore the use of stochasticity in traditionally deterministic computational domains.

\begin{acks}
This project was funded in part by an NSERC Discovery Grant (RGPIN-2023-05120) and an Ontario Early Researchers Award.
The first author was funded by an NSERC Canada Graduate Scholarship --- Doctoral.
We thank the attendees of the GraphQUON workshop for useful early discussions, and Mark Harris and Lars Nyland for proofreading.
\end{acks}

\bibliographystyle{ACM-Reference-Format}
\bibliography{stochastic_barnes_hut.bib}

\appendix
\section{Proof of Unbiasedness}
Here we prove Theorem~\ref{thm:unbiased}.
A restatement of the theorem is as follows:
\begin{theorem*}
Eq.~\ref{eq:estimator} is an unbiased estimator of Eq.~\ref{eq:kernel_sum}; that is, $E \left[ \hat{F}_1(\bq) \right] = F(\bq)$.
\end{theorem*}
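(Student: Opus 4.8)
The plan is to show that the estimator $\hat{F}_1(\bq)$ telescopes in expectation to the exact path sum $F(\bq) = \sum_i F_{\Path_i}(\bq)$, exploiting linearity of expectation and the tower rule to handle the two independent sources of randomness (the path index $I$ and the Russian-roulette truncation length $K$). First I would condition on $I = i$: since $I$ is drawn uniformly over the $M$ source indices, $E[\hat{F}_1(\bq)] = \frac{1}{M}\sum_{i=1}^M E[\hat{F}_1(\bq) \mid I=i]$, so it suffices to prove that each conditional expectation equals $M \cdot F_{\Path_i}(\bq)$ — the factor $M$ being absorbed precisely by the $p(I \in \Tree_{i,k-1})$ denominators, which for a leaf-rooted single point contribute a factor $1/M$ at the top of the path and grow as the aggregation set widens. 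Actually the cleanest bookkeeping is to keep the estimator in its aggregated form and verify directly that the prefactor $1/[p(I\in\Tree_{I,k-1})\,p(K\ge k)]$ is exactly the reciprocal of the probability that a given term $\Delta_{\cdot,k-1}$ is included in the sample, so that each $\Delta$ term's contribution to the expectation is unweighted.

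The core computation is the inner sum. Conditioned on $I=i$, the truncation length $K$ satisfies $p(K \ge k) = \prod_{j=0}^{k-1} p_{i,j}(\bq)$ by the definition of the Russian-roulette process (the walk reaches depth $k$ iff it survived every coin flip up to that point), and this is exactly the denominator appearing in Eq.~\ref{eq:estimator}. Therefore
\begin{equation}
E\!\left[\sum_{k=1}^K \frac{\Delta_{i,k-1}}{p(I\in\Tree_{i,k-1})\,p(K\ge k)} \;\Big|\; I=i\right]
= \sum_{k=1}^{d_i} p(K\ge k)\cdot \frac{\Delta_{i,k-1}}{p(I\in\Tree_{i,k-1})\,p(K\ge k)}
= \sum_{k=1}^{d_i} \frac{\Delta_{i,k-1}}{p(I\in\Tree_{i,k-1})},
\end{equation}
where the first equality is the standard identity $E[\sum_{k=1}^K a_k] = \sum_k a_k\, p(K \ge k)$ for a nonnegative-integer-valued $K$ independent of the $a_k$'s. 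The $p(K\ge k)$ factors cancel cleanly. What remains is to re-expand the contribution swaps $\Delta_{i,k-1}$ from Eq.~\ref{eq:contrib_swap}, sum the $\frac{1}{M}\sum_i$ over all paths, and check that the child-contribution terms at level $k$ telescope against the parent-contribution terms at level $k+1$ when grouped by tree node — i.e. recovering Eq.~\ref{eq:single_path} summed over $i$, which is exactly $F(\bq)$. The $p(I\in\Tree_{i,k-1})$ weights are what make this regrouping consistent: a node $\Tree_a$ at depth $k$ is visited along $|\Tree_a|$ distinct paths, and $p(I\in\Tree_a) = |\Tree_a|/M$ under uniform sampling, so the per-path weight $1/(M \cdot p(I\in\Tree_a))$ times the multiplicity $|\Tree_a|$ yields exactly $1$, collapsing the aggregated term to a single unweighted telescoping increment.

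The main obstacle I anticipate is purely organizational rather than deep: getting the index bookkeeping exactly right when the aggregated $\Delta_{i,k}$ terms are re-disaggregated and re-summed over $i$, so that the telescoping in Eq.~\ref{eq:single_path} is reproduced term-for-term across the whole tree. One must be careful that (i) the base term $\mm_{I,0} f(\com_{I,0},\bq)$ has no denominator and, when averaged over the uniform $I$ and grouped by subdomain root, reproduces $\sum_i m_i f(\com_{i,0},\bq)$ correctly; (ii) the clamped Russian-roulette probabilities of Eq.~\ref{eq:rr_prob} are strictly positive on every path so that no term is divided by zero and the identity $p(K\ge k) = \prod p_{i,j}$ holds without degenerate cases; and (iii) the domain-stratification variant (path roots at the children of $\Tree_0$ rather than at $\Tree_0$ itself) only changes the base term and the range of $k$, leaving the argument otherwise intact. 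None of these require new ideas, but they are where a sloppy write-up would go wrong, so I would state the weights and the survival-probability identity as explicit lemmas before assembling the telescoping argument.
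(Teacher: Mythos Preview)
Your proposal is correct and takes essentially the same route as the paper's proof: both cancel the $p(K\ge k)$ denominator against the Russian-roulette survival probability, cancel $p(I\in\Tree_{i,k-1})$ against the multiplicity of paths through each node, and then regroup the contribution swaps along common path prefixes to telescope back to $\sum_i F_{\Path_i}(\bq)=F(\bq)$. The only difference is presentational---you condition on $I$ first and invoke the survival identity $E\bigl[\sum_{k=1}^K a_k\bigr]=\sum_k a_k\,p(K\ge k)$ explicitly, whereas the paper expands over the joint distribution of $(I,K)$ in a single chain of equalities---but the underlying argument is the same.
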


\begin{proof}
Expanding $E \left[ \hat{F}_1(\bq) \right]$:
\begin{align*}
    E \left[ \hat{F}_1(\bq) \right] &= \sum_{i = 1}^M p(I = i)\sum_{\ell=0}^{d_i} p(K = \ell) \Biggl( \mm_{i,0}f(\com_{i,0}, \bq) \quad + \\
                                    &\quad \sum_{k=1}^\ell \frac{\Delta_{i,k-1}}{p(I \in \Tree_{i,k-1})p(K \ge k)} \Biggr) \\
                                    &= \sum_{i = 1}^M \sum_{\ell=0}^{d_i} p(I = i)p(K = \ell)\Biggl( \sum_{j \in \Tree_{i,0}} m_j f(\com_{I,0}, \bq) \quad + \\
                                    &\quad \sum_{k=1}^\ell \sum_{c \in \Children(\Tree_{i,k-1})} \sum_{j \in \Tree_c} \frac{ m_j( f(\com_c, \bq) - f(\com_{i,k-1}, \bq))}{p(I \in \Tree_{i,k-1})p(K \ge k)} \Biggr) \\
                                    &= \sum_{i=1}^M m_i f(\com_0, \bq) + \sum_{i = 1}^M \sum_{\ell=0}^{d_i} \sum_{k=1}^\ell \sum_{c \in \Children(\Tree_{i,k-1})} \sum_{j \in \Tree_c} \\
                                    & \qquad \frac{ p(I = i)p(K = \ell) m_j( f(\com_c, \bq) - f(\com_{i,k-1}, \bq))}{p(I \in \Tree_{i,k-1})p(K \ge k)} \\
                                    &= \sum_{i = 1}^M \Biggl( m_i f(\com_0, \bq) \quad + \\
                                    &\quad \sum_{k = 1}^{d_i} \cancelto{1}{\frac{ \left( \sum_{j \in \Tree_{i,k-1}}p(I = j) \right) \left( \sum_{\ell = k}^{d_i} p(K = \ell) \right) }{p(I \in \Tree_{i,k-1})p(K \ge k)}} \\
                                    & \qquad m_i( f(\com_{i,k}, \bq) - f(\com_{i,k-1}, \bq)) \Biggr) \\
                                    &= \sum_{i = 1}^M F_{\Path_i}(\bq) \\ 
                                    &= F(\bq)
\end{align*}
The third equality comes from the fact that every path has a common root (i.e., $T_{i,0} = T_0$ for all $i$) and $\sum_{i = 1}^M \sum_{\ell=0}^{d_i} p(I=i)p(K=\ell) = 1$, and the fourth equality comes from rearranging and grouping like terms from contribution swaps along different paths with common prefixes.
\end{proof}

\section{Pseudocode}
Pseudocode for the Barnes-Hut approximation, and our path sample estimator, are given below.

\begin{algorithm}
    \caption{Barnes-Hut approximation, \texttt{barnesHut}}\label{alg:bh}
    \SetAlgoLined
    \LinesNumbered
    \SetKwInOut{Input}{Inputs}
    \SetKwInOut{Output}{Outputs}
    \Input{Tree node $\Tree_a$, kernel function $f$, query point $\bq$, approximation parameter $\beta$}
    \Output{Approximation of Eq.~\ref{eq:kernel_sum} (i.e., Eq.~\ref{eq:bh_sum})}
    \BlankLine
    \CommentSty{$\Tree_a$ has total mass $\mm_a$ and center of mass $\com_a$}\;
    $F \leftarrow 0$\;
    $\FF \leftarrow \frac{\| \bq - \com_a \|}{| \bbox_a |}$\;
    \eIf{$\Tree_a$ is a leaf or $\FF \ge \beta$}{
        $F \leftarrow \mm_a f(\com_a, \bq)$\;
    }{
        \For{$\Tree_b \in \Children(\Tree_a)$}{
            $F \leftarrow F + \texttt{barnesHut}(\Tree_b, f, \bq, \beta)$\;
        }
    }
    Return $F$\;
\end{algorithm}

\begin{algorithm}
    \caption{Stratified path sample estimator, \texttt{pathSampleEstimator}}\label{alg:path}
    \SetAlgoLined
    \LinesNumbered
    \SetKwInOut{Input}{Inputs}
    \SetKwInOut{Output}{Outputs}
    \Input{Tree root $\Tree_0$, kernel function $f$, query point $\bq$, samples per subdomain $S$}
    \Output{Approximation of Eq.~\ref{eq:kernel_sum}}
    \BlankLine
    $F \leftarrow 0$\;
    \For{$\Tree_a \in \Children(\Tree_0)$}{
        \CommentSty{$\Tree_a$ has total mass $\mm_a$ and center of mass $\com_a$}\;
        $CV_a \leftarrow \mm_a f(\com_a, \bq)$\;
        $F_a \leftarrow 0$\;
        \For{$s \in 1 \ldots S$}{
            $I \leftarrow \texttt{uniformSamplePath}(\Tree_a)$\;
            $k \leftarrow 0$\;
            $\prr \leftarrow 1$\;
            $\Tree_{I,k} \leftarrow \Tree_a$\;
            \While{$\Tree_{I,k}$ is not a leaf}{
                $p_{I,k} \leftarrow$ Eq.~\ref{eq:rr_prob}\;
                \If{$\texttt{rand}() \ge p_{I,k}$}{
                    break;
                }
                $\Delta_{I,k} \leftarrow$ Eq.~\ref{eq:contrib_swap}\;
                $\pa \leftarrow \frac{| \Tree_{I,k} |}{| \Tree_a |}$ ; $\prr \leftarrow \prr \cdot p_{I,k}$\;
                $F_a \leftarrow F_a + \frac{\Delta_{I,k}}{\pa \cdot \prr}$\;
                $\Tree_{I,k} \leftarrow$ child of $\Tree_{I,k}$ containing $I$ ; $k \leftarrow k + 1$\;
            }
        }
        $F \leftarrow F + CV_a + \frac{1}{S}F_a$\;
    }
    return $F$\;
\end{algorithm}

\begin{figure*}
\includegraphics[width=\textwidth]{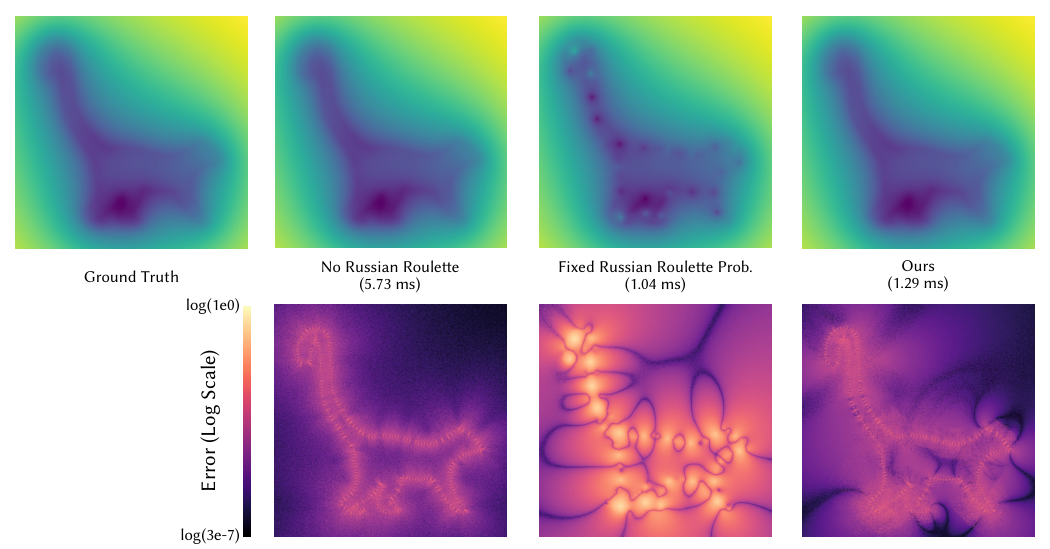}
\caption{
Compared to the ground truth (top, left), an estimator with Russian roulette disabled (top, middle left) and our estimator (top, right) produce virtually identical Coulomb potential fields from a source set sampled from a lemur polyline, while using a fixed Russian roulette probability (top, middle right) has significant errors. The log error of each estimator is shown in the bottom row. Despite our method having slightly higher log error than a disabled Russian roulette estimator, it is roughly 5 times faster to evaluate on a GPU, only being slightly slower than the simple fixed-probability Russian roulette scheme.
}
\label{fig:rr_supp}
\end{figure*}
\section{Russian Roulette Comparison}
Russian roulette schemes are a technique used to improve an estimator's efficiency (the reciprocal of variance times computational cost), with the tradeoff of increasing variance.
In other words, good Russian roulette schemes should indirectly lead to variance reduction by allowing more samples to be taken for the same computational budget, or equivalently, a slight increase in variance in less time for the same sample budget.
We investigated the quality of our Russian roulette algorithm described in the main paper by comparing it to alternative estimators without Russian roulette (i.e., all paths are traversed to the leaves), and using a fixed probability of $\frac{1}{2}$.
The results are shown in Fig.~\ref{fig:rr_supp}.
Overall, our method achieves results comparable to an estimator without Russian roulette while being 5 times faster.

\end{document}